\begin{document}

\newcommand{\ket}[1]{\vert #1 \rangle}
\newcommand{\bra} [1] {\langle #1 \vert}
\newcommand{\braket}[2]{\langle #1 | #2 \rangle}
\newcommand{\proj}[1]{\ket{#1}\bra{#1}}
\newcommand{\mean}[1]{\langle #1 \rangle}
\newcommand{\opnorm}[1]{|\!|\!|#1|\!|\!|_2}
\newtheorem{theo}{Theorem}
\newtheorem{lem}{Lemma}
\newtheorem{defin}{Definition}
\newtheorem{corollary}{Corollary}
 \newtheorem{conj}{Conjecture}
 \newtheorem*{prop}{Properties}
 
\newtheorem{theobis}{Theorem}

\title{Multidimensional  entropic uncertainty relation based on a commutator matrix \\
in position and momentum spaces}

\author{Anaelle Hertz}
\email{ahertz@ulb.ac.be}
\affiliation{Centre for Quantum Information and Communication, \'Ecole polytechnique de Bruxelles, Universit\'e libre de Bruxelles, 1050 Brussels, Belgium}

\author{Luc Vanbever}
\affiliation{Centre for Quantum Information and Communication, \'Ecole polytechnique de Bruxelles, Universit\'e libre de Bruxelles, 1050 Brussels, Belgium}

\author{Nicolas J. Cerf}
\affiliation{Centre for Quantum Information and Communication, \'Ecole polytechnique de Bruxelles, Universit\'e libre de Bruxelles, 1050 Brussels, Belgium}

\begin{abstract}
The uncertainty relation for continuous variables due to Byalinicki-Birula and Mycielski expresses the complementarity between two $n$-tuples of canonically conjugate variables $(x_1,x_2,\cdots x_n)$ and $(p_1,p_2,\cdots p_n)$ in terms of Shannon differential entropy. Here, we consider the generalization to variables that are not canonically conjugate and derive an entropic uncertainty relation expressing the balance between any two $n$-variable Gaussian projective measurements. The bound on entropies is expressed in terms of the determinant of a matrix of commutators between the measured variables. This  uncertainty relation also captures the complementarity between any two incompatible linear canonical transforms, the bound being written in terms of the corresponding symplectic matrices in phase space. Finally, we extend this uncertainty relation to Rényi entropies and also prove a covariance-based uncertainty relation which generalizes Robertson relation.



\end{abstract}

\maketitle


\section{Introduction}

At the heart of quantum mechanics, uncertainty relations reflect the impossibility to define -- exactly and simultaneously -- the value of two observables that do not commute, such as the position $\hat x$ and momentum $\hat p$ of a particle. Uncertainty relations, expressed in terms of variances of observables, were first introduced by Heisenberg \cite{Heisenberg} and Kennard \cite{Kennard}, and then generalized by Schr\"odinger \cite{Schrodinger} and Robertson \cite{Robertson}. Later on, it was shown by Hirschman \cite{Hirschman} that uncertainty relations may also be formulated in terms of Shannon entropies instead of variances, leading to the first entropic uncertainty relation for canonically conjugate variables $\hat x$ and $\hat p$ proven by Bialynicki-Birula and Mycielski \cite{Birula} and Beckner \cite{Beckner}. Entropic uncertainty relations have also been developed for discrete observables in finite-dimensional spaces, see  \cite{Coles} for a review, but here we focus on continuous-spectrum observables in an infinite-dimensional space. Specifically, we use the notations of quantum optics and view variables $\hat x$ and $\hat p$ as canonically conjugate quadrature components of a bosonic mode.
Then, the $n$-modal version of the entropic uncertainty relation for the $n$-tuples $\vec x = (x_1,x_2,\cdots x_n)$ and $\vec p=(p_1,p_2,\cdots p_n)$ is expressed as\footnote{We set $\hbar=1$ throughout this paper.}\cite{Birula} 
\begin{equation}
h(\vec x) + h(\vec p) \geq n \ln(\pi e )
\label{birula}
\end{equation}
 where $h(\vec x)$ and $h(\vec p)$ are the Shannon differential entropies of  $\vec x$ and  $\vec p$, namely
 \begin{eqnarray}
h(\vec x) \equiv h(|\psi(\vec x)|^2 )&=& - \int d\vec x\; |\psi(\vec x)|^2 \ln |\psi(\vec x)|^2,\nonumber\\
h(\vec p)  \equiv h(|\phi(\vec p)|^2 )&=& -\int d\vec p\; |\phi(\vec p)|^2 \ln |\phi(\vec p)|^2,
 \end{eqnarray}
with  $|\psi(\vec x)|^2=|\langle x_1,x_2,\cdots x_n | \psi \rangle|^2$ and $|\phi(\vec p)|^2=|\langle p_1,p_2,\cdots p_n | \psi \rangle|^2$ being the probability distributions of $\vec x$ and  $\vec p$ in the pure state $|\psi\rangle$. Of course, $\phi(\vec p)$ is the Fourier transform of $\psi(\vec x)$, which is at the origin of the complementarity between $\vec x$ and  $\vec p$ expressed by Eq.~(\ref{birula}). 
Lately, this entropic uncertainty relation has been extended by taking $x$-$p$ correlations into account \cite{hertz}, the significant advantage being that the resulting uncertainty relation is saturated by any pure $n$-modal Gaussian state.

In 2011, Huang \cite{huang} generalized the entropic uncertainty relation to a pair of observables that are not canonically conjugate. More precisely, defining the observables
\begin{equation}
\hat{A}=\sum_{i=1}^n(a_i \, \hat{x}_i+a_i' \, \hat{p}_i),\quad
\hat{B}=\sum_{i=1}^n(b_i \, \hat{x}_i+b_i' \, \hat{p}_i),
\label{ABdef}
\end{equation} 
he showed that
\begin{equation}
h(\hat{A})+h(\hat{B})\geq\ln(\pi e |[\hat{A},\hat{B}]|)
\label{huang}
\end{equation}
where $[\hat{A},\hat{B}]$ (which is a scalar) is the commutator between both observables. Obviously, if $\hat{A}=\hat{x}$ and $\hat{B}=\hat{p}$, this inequality reduces to Eq.~(\ref{birula}). In addition, a similar result had earlier been obtained by Guanlei et al. \cite{Guanlei} in the special case where $n=1$, namely 
\begin{equation}
h(\hat{x}_\theta)+h(\hat{x}_\phi)\geq \ln(\pi e | \sin(\theta-\phi)|) 
\label{eq-Guanlei}
\end{equation}
where $\hat{x}_\theta=\hat{x} \cos \theta + \hat{p} \sin \theta$ and $\hat{x}_\phi= \hat{x} \cos \phi + \hat{p} \sin \phi $ are two rotated quadratures.

In this paper, we introduce a generalization of the uncertainty relation of Byalinicki-Birula and Mycielski, which is stated in the form of our Theorem \ref{theoFRFTcom}. It addresses the situation where $n$ arbitrary quadratures are  jointly measured on $n$ modes, expressing the balance between two such joint measurements (see Fig.~\ref{transfoAB}). In other words, we state an entropic uncertainty relation between two arbitrary $n$-modal Gaussian projective measurements (or, equivalently, two $n$-mode Gaussian unitaries $U_A$ and $U_B$). The lower bound of our uncertainty relation, Eq.~(\ref{EURFRFTcom}), depends on the determinant of a $n\times n$ matrix formed with the commutators between the $n$ measured quadratures in both cases. 
 In contrast, Eq.~(\ref{birula}) is restricted to the case of measuring either all $x$ quadratures or all $p$ quadratures on $n$ modes, while Refs.~\cite{huang,Guanlei} treat the balance between two single-mode measurements only.
 
Interestingly, the probability distribution of the measured quadratures is given by the squared modulus of the linear canonical transform (LCT) associated with $U_A$ or $U_B$, so that our entropic uncertainty relation also captures the complementarity between two incompatible $n$-dimensional LCTs as expressed by our Lemma~\ref{theoFRFT}. It simply reduces to Eq. (\ref{birula}) when the two LCTs are connected by a $n$-dimensional Fourier transform, mapping $\vec x = (x_1,x_2,\cdots x_n)^T$ onto $\vec p=(p_1,p_2,\cdots p_n)^T$.
 
In Section II, we define general $n$-dimensional LCT's and give some useful properties. Section III presents our results on uncertainty relations for $n$ modes. First, we derive a generalized  entropic uncertainty relation based on differential Shannon entropies (our Theorem \ref{theoFRFTcom}, with its extension to a larger-dimensional space), then we extend it to Rényi entropies (our Theorem \ref{theo-Renyi}), and finally we exhibit a covariance-based uncertainty relation (our Theorem \ref{theo-covariance}). In Section IV, we conclude and suggest a conjecture for a generalized entropic uncertainty relation in the case where the commutators differ from scalars.

	\begin{figure}[t]
		\includegraphics[trim = 0.5cm 19cm 0.3cm 2.5cm, clip, width=0.95\columnwidth]{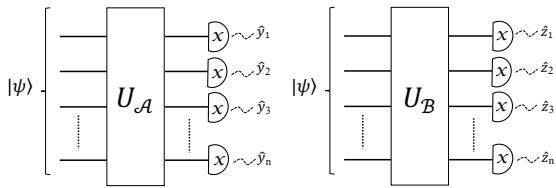}
		\caption{\label{transfoAB} Schematic of two $n$-modal Gaussian projective measurements applied onto state $|\psi\rangle$, resulting in the $n$ quadratures $\hat y_i$'s or $\hat z_i$'s. These measurements can be implemented by applying a Gaussian unitary  ($U_A$ or $U_B$) onto $|\psi\rangle$ and measuring the $\hat x$-quadratures of the $n$ modes. Our uncertainty relation, Eq.~(\ref{EURFRFTcom}), expresses the complementarity between the $\hat y_i$'s and $\hat z_i$'s, or equivalently between the linear canonical transforms associated with $U_A$ and $U_B$.}
	\end{figure}

\section{Linear canonical transforms}

Before deriving our uncertainty relations, we need to properly define fractional Fourier transforms (FRFTs) along with their generalization to LCTs. Some early papers on FRFTs appeared in the 1920's, but this topic became investigated in depth only more recently in the fields of signal processing and quantum optics (see, e.g., \cite{Pei,Bultheel,Morsche,Ozaktasbook,Ozaktas} for more details). In one dimension, the FRFT of a wave function $f(x)$ can be understood as the new wave function obtained when the Wigner function corresponding to $f(x)$ undergoes a rotation of angle $\alpha$ in phase space. If $\alpha=\pi/2$, then the FRFT simply coincides with the usual Fourier transform, connecting the time and frequency domains in the field of signal processing or the canonically conjugate $x$- and $p$-quadratures in quantum optics. Mathematically, the one-dimensional FRFT of function $f(x)$ is defined as
\begin{equation}
\mathcal{F}_\alpha(y)=\sqrt{\frac{1-i\cot\alpha}{2\pi}} \; e^{\frac{i}{2}y^2\cot\alpha}\int e^{\frac{-iyx}{\sin\alpha}}e^{\frac{i}{2}x^2\cot\alpha} f(x) \, dx
\end{equation}
The one-dimensional FRFT can be generalized to one-dimensional LCTs by including all affine linear transformations in phase space $(x,p)$, going beyond rotations. Accordingly, the LCT of wave function $f(x)$ is the new wave function obtained when the corresponding Wigner function undergoes a symplectic transformation $\mathcal{S}$. The one-dimensional LCT of $f(x)$ is defined as
\begin{equation}
\mathcal{F}_\mathcal{S}(y)=\sqrt{\frac{1}{2\pi i b}} \; e^{\frac{id}{2b}y^2}\int e^{\frac{-iyx}{b}}e^{\frac{ia}{2b}x^2} f(x)\, dx
\end{equation}
where $\mathcal{S}=\begin{pmatrix}a&b\\c&d\end{pmatrix}$ is a symplectic matrix with $a$, $b$, $c$, and $d$ being real parameters, and $b\neq0$.

The notion of LCT can readily be extended to $n$ dimensions, the resulting transformation being also sometimes called  $n$-dimensional FRFT. The physical interpretation is straightforward, namely a LCT is the transformation of a $n$-dimensional wave function $f(\vec{x})$ that is effected by any symplectic transformation in the $2n$-dimensional phase space of variables  $(x_1,x_2,\cdots x_n)$ and $(p_1,p_2,\cdots p_n)$. We write the symplectic matrix $\mathcal{S}$ as 
\begin{eqnarray}
\mathcal{S}&=&\begin{pmatrix}a&b\\c&d\end{pmatrix}\nonumber\\
&=&\begin{pmatrix}\mathds{1} &0\\db^{-1}&\mathds{1}  \end{pmatrix}
\begin{pmatrix}b&0\\0&b^{-1}\end{pmatrix}
\begin{pmatrix}0&\mathds{1}  \\-\mathds{1}  &0\end{pmatrix}
\begin{pmatrix}\mathds{1}  &0\\b^{-1}a&\mathds{1}  \end{pmatrix}
\label{matS}
\end{eqnarray}
where $a$, $b$, $c$, and $d$ are $n\times n$ real matrices, $\openone$ is the $n\times n$ identity matrix, and $\det(b)\neq 0$. 
Since $\mathcal{S}$ is symplectic, it obeys the constraint
\begin{eqnarray}
\mathcal{S}  J  \mathcal{S}^T = J
\mathrm{~~~~~with~}  J=\begin{pmatrix}0&\mathds{1}  \\-\mathds{1}  &0\end{pmatrix}
\end{eqnarray}
being the symplectic form, so that $\det(S)=1$.
This also implies that $ab^T$ and $cd^T$ are symmetric matrices, and $ad^T-bc^T=\mathds{1} $.
The corresponding symplectic transformation in phase space is
\begin{eqnarray}
\begin{pmatrix} \vec y \\ \vec q \end{pmatrix} 
=\mathcal{S}  \begin{pmatrix} \vec x \\ \vec p \end{pmatrix} ,
\end{eqnarray}
where $\vec y = (y_1,y_2,\cdots y_n)^T$ and $\vec q = (q_1,q_2,\cdots q_n)^T$ form a new pair of canonically conjugate $n$-tuples.
In state space, the LCT of $f(\vec x)$ can be written as
\begin{eqnarray}
\mathcal{F}_\mathcal{S}[f(\vec{x})](\vec y)&=&\frac{1}{\sqrt{(2\pi)^n|\det(b)|}}\int d\vec{x}f(\vec x)\,e^{-i\,b^{-1}\vec y\cdot \vec x  }\nonumber\\
&& ~~~~~~  \times \;
e^{\frac{i}{2}\left[  \vec x^T(b^{-1}a)\vec x+\vec y^T(db^{-1})\vec y\right]}\nonumber\\
&=&C_{db^{-1}}D_{b^{-1}}\mathcal{F}C_{b^{-1}a}[f(\vec{x})](\vec y)
\label{FRFT}
\end{eqnarray}
with
\begin{eqnarray}
C_{r}[f](\vec x)&=&e^{\frac{i}{2}\vec x^T r \vec x} f(\vec x)\nonumber\\
D_b[f](\vec x)&=&\sqrt{|\det(b)|}f(b\,\vec x)\nonumber\\
\mathcal{F}[f](\vec{x})&=&\frac{1}{(2\pi)^{n/2}}\int d\vec y f(\vec y)e^{-i\,\vec{x}\cdot\vec{y}}
\end{eqnarray}
where $C_r$ represents the chirp multiplication, $D_b$ the squeezing (or dilation) operator, and $\mathcal{F}$ the usual Fourier transform.
These operators are directly related to the decomposition of $\mathcal{S}$ in Eq.~(\ref{matS}). Note also that the chirp multiplication (in one dimension) can be expressed as a product of the other two operators, namely $C_r=R_{\theta-\pi/2}\cdot D_{\tan\theta}\cdot R_\theta$ where $R_\theta$ represent the rotation and $r=\tan\theta-\cot\theta$. Finally, note that the set of LCTs in phase space is in one-to-one correspondence with the set of Gaussian unitaries in state space \cite{weedbrook}, which can indeed be decomposed into passive linear-optics operations (phase shifters and beam splitters, i.e., rotations in phase space) and active squeezing operations (i.e., area-preserving dilations in phase space).

 Here are some properties  of LCTs that will be useful to prove our results in Section III.
 \begin{prop}
 	\leavevmode
 	\begin{enumerate}
 		\item$
 		\mathcal{F}_\mathcal{A}\mathcal{F}_\mathcal{B}=\mathcal{F}_\mathcal{AB}
$
 		\item  $
 		D_{b_1}D_{b_2}=D_{b_1b_2}
 	$
 		\item   	$
 		\mathcal{F}^{-1}=D_{-1}\mathcal{F}
$
 		where $	\mathcal{F}^{-1}$ is the inverse Fourier transform.
 		\item  $
 		\left|C_{r}\,f\right|=\left|f\right|
$
 	\end{enumerate}
 \end{prop}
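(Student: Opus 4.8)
The plan is to treat the four properties in increasing order of difficulty, since the last three follow by direct substitution into the definitions of $C_r$, $D_b$ and $\mathcal{F}$, whereas the first is the genuine group-homomorphism (composition) law and will be the crux. I would dispose of properties 2, 3 and 4 first and then concentrate the effort on property 1.

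For property 4, I would simply note that the chirp factor $e^{\frac{i}{2}\vec x^T r\vec x}$ is a pure phase of unit modulus whenever $r$ is real (and, since $\vec x^T r\vec x$ sees only its symmetric part, $r$ may be taken symmetric), so that $|C_r[f](\vec x)|=|e^{\frac{i}{2}\vec x^T r\vec x}|\,|f(\vec x)|=|f(\vec x)|$ pointwise. For property 3, I would apply $D_{-1}$ to $\mathcal{F}[f]$: since $|\det(-\mathds{1})|=1$ the dilation prefactor is trivial and $D_{-1}$ merely flips the sign of the argument, $D_{-1}\mathcal{F}[f](\vec x)=\mathcal{F}[f](-\vec x)$, which turns $e^{-i\vec x\cdot\vec y}$ into $e^{+i\vec x\cdot\vec y}$ and hence reproduces the inverse Fourier transform. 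For property 2, I would substitute the dilation definition twice, using multiplicativity of the determinant, $\sqrt{|\det b_1|}\sqrt{|\det b_2|}=\sqrt{|\det(b_1b_2)|}$, together with the linear action $f\big(b_2(b_1\vec x)\big)=f\big((b_2b_1)\vec x\big)$; the only delicate point is keeping track of the order of composition, which determines the ordering of the matrix product inside $D_{b_1b_2}$.

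The substantial statement is property 1, the composition law $\mathcal{F}_\mathcal{A}\mathcal{F}_\mathcal{B}=\mathcal{F}_\mathcal{AB}$. There are two natural routes. The direct route composes the two Gaussian integral kernels of Eq.~(\ref{FRFT}), performs the intermediate Gaussian integration over the shared variable, and checks that the resulting quadratic form and prefactor coincide with those of $\mathcal{F}_\mathcal{AB}$; this works but forces one to invoke the symplectic identities ($ab^T$ and $cd^T$ symmetric, $ad^T-bc^T=\mathds{1}$) to simplify the exponent, and is computationally heavy. The cleaner route, which I would follow, is the phase-space picture: by construction $\mathcal{F}_\mathcal{S}$ is the state-space operator whose action on the Wigner function is the pushforward by $\mathcal{S}$, i.e. $W_{\mathcal{F}_\mathcal{S}[f]}(\xi)=W_f(\mathcal{S}^{-1}\xi)$ with $\xi=(\vec y,\vec q)^T$. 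Composing, $\mathcal{F}_\mathcal{A}\mathcal{F}_\mathcal{B}$ sends $W_f$ to $W_f\circ(\mathcal{B}^{-1}\mathcal{A}^{-1})=W_f\circ(\mathcal{A}\mathcal{B})^{-1}$, which is exactly the Wigner action of $\mathcal{F}_\mathcal{AB}$; since the map from a pure state to its Wigner function is injective, the two operators must agree.

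The main obstacle is the overall phase: the metaplectic correspondence $\mathcal{S}\mapsto\mathcal{F}_\mathcal{S}$ is only a projective representation, so the Wigner argument pins down $\mathcal{F}_\mathcal{A}\mathcal{F}_\mathcal{B}$ and $\mathcal{F}_\mathcal{AB}$ only up to a global phase factor. I would close this gap by checking that the square-root prefactor convention adopted in Eq.~(\ref{FRFT}) is consistent under composition, most economically by verifying the phase on a single reference input such as a Gaussian, for which all the integrals are explicit, so that the equality holds on the nose. In any case, for the uncertainty relations of Section III only the moduli of the transforms enter, so this residual phase is immaterial to the downstream results.
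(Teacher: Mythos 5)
Your proposal is correct and takes essentially the same route as the paper: the paper's entire proof of property 1 is the phase-space argument you describe (the composition $\mathcal{F}_\mathcal{A}\mathcal{F}_\mathcal{B}$ is represented in phase space by the product $\mathcal{A}\mathcal{B}$, which is again symplectic because symplectic matrices form a group and hence represents $\mathcal{F}_{\mathcal{AB}}$), while properties 2--4 are dismissed there as straightforward, exactly as your direct substitutions show. If anything you are more careful than the paper, which silently ignores the projective (metaplectic) phase ambiguity you flag; your remarks that the phase can be fixed by evaluation on a Gaussian, and that only the moduli $|\mathcal{F}_\mathcal{A}|^2$, $|\mathcal{F}_\mathcal{B}|^2$ enter the uncertainty relations of Section III anyway, are precisely what is needed to make the paper's one-line argument airtight.
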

 \begin{proof}
 	\leavevmode
 	\begin{enumerate}
\item  	Using Eq.~(\ref{FRFT}) and the corresponding representation in phase space, Eq.~(\ref{matS}), we see that $\mathcal{F}_\mathcal{A}\mathcal{F}_\mathcal{B}$ is represented by the matrix $\mathcal{AB}$. Since the symplectic matrices form a group, the product of two symplectic matrices is a symplectic matrix, which also admits decomposition (\ref{matS}) and thus represents the linear canonical transform $\mathcal{F_{AB}}$.
 	\end{enumerate}
 	
 	\indent \indent Proofs of {\it 2, 3} and {\it 4} are straightforward. 
 \end{proof}


 \section{Multidimensional uncertainty relations}

\subsection{Entropic uncertainty relation between two linear canonical transforms}

Let $\ket{\psi}$ be an arbitrary $n$-mode state. We wish to express the complementarity between two incompatible LCTs corresponding to two Gaussian unitaries ($U_A$ or $U_B$) applied onto $\ket{\psi}$. As shown in Fig.~\ref{transfoAB}, we measure in both cases the $n$-tuple of output $x$-quadratures, which corresponds to applying two possible $n$-modal Gaussian projective measurements on $\ket{\psi}$. The vectors of measurement outcomes are noted, respectively,  $\vec y = (y_1,\cdots, y_n)^T$ or $\vec z=(z_1,\cdots, z_n)^T$. 
Denoting as $\mathcal{A}$ and $\mathcal{B}$ the symplectic transformations associated with $U_A$ and $U_B$, and writing as
$\hat{r}=(\hat x_1,\cdots ,\hat x_n,\hat p_1,\cdots ,\hat p_n)^T$ the $2n$-dimensional vector of input quadratures,
we may express the corresponding vectors of output quadratures as
 	\begin{eqnarray}
 	\hat{r}_A =\mathcal{A} \; \hat{r} 
	\equiv \begin{pmatrix} \vec y \\ \vec q \end{pmatrix},
 	\quad
 	\hat{r}_B =\mathcal{B} \; \hat{r}
	\equiv \begin{pmatrix} \vec z \\ \vec o \end{pmatrix} .
 	\label{defyz}
 	\end{eqnarray}
where $\vec q$ (resp. $\vec o$) is the vector of quadratures that are canonically conjugate with $\vec y$ (resp. $\vec z$).
The probability distributions for $\vec y$ and $\vec z$ are thus given by the squared modulus of the LCTs associated with $U_A$ and $U_B$, 
namely $\left| \mathcal{F}_\mathcal{A}[\psi(\vec x)](\vec y) \right|^2$ and $\left| \mathcal{F}_\mathcal{B}[\psi(\vec x)](\vec z) \right|^2$.
In order to find an entropic uncertainty relation for $\vec y$ and $\vec z$, we first express the complementarity between $\mathcal{F}_\mathcal{A}$ and $\mathcal{F}_\mathcal{B}$ in the following Lemma.
 	
 

 \begin{lem}	
	\label{theoFRFT}
	Let $\mathcal{F}_\mathcal{A}$ and $\mathcal{F}_\mathcal{B}$ be two LCTs of a function $f(\vec x)$ with $\vec x = (x_1,\cdots x_n)$. Then, their squared moduli satisfy the entropic uncertainty relation
\begin{equation}
h(|\mathcal{F}_\mathcal{A}|^2)+h(|\mathcal{F}_\mathcal{B}|^2)\geq\ln\left((\pi e)^n |\det(\mathcal{B}_b\mathcal{A}_a^T-\mathcal{B}_a\mathcal{A}_b^T)|\right)
\label{EURFRFT}
\end{equation}
where 
\begin{equation}
\mathcal{A}=
\begin{pmatrix}
\mathcal{A}_a&\mathcal{A}_b\\\mathcal{A}_c&\mathcal{A}_d
\end{pmatrix}
\quad\text{and}\quad
\mathcal{B}=\begin{pmatrix}
\mathcal{B}_a&\mathcal{B}_b\\\mathcal{B}_c&\mathcal{B}_d
\end{pmatrix}
\label{matAB}
\end{equation}
are the symplectic matrices associated with $\mathcal{F}_\mathcal{A}$ and $\mathcal{F}_\mathcal{B}$ [acting on the quadrature operators as in Eq.~(\ref{defyz})] and $h(\cdot)$ denotes Shannon differential entropy.
\end{lem}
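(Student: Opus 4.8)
The plan is to reduce the general statement to the canonical Bialynicki-Birula--Mycielski relation, Eq.~(\ref{birula}), by exploiting the group structure of LCTs (Property~1) together with the elementary way in which chirps and dilations act on differential entropy. First I would set $g=\mathcal{F}_\mathcal{A}[f]$, so that by Property~1 the other transform reads $\mathcal{F}_\mathcal{B}[f]=\mathcal{F}_\mathcal{B}\mathcal{F}_\mathcal{A}^{-1}[g]=\mathcal{F}_\mathcal{C}[g]$ with $\mathcal{C}=\mathcal{B}\mathcal{A}^{-1}$. This turns the two-transform inequality into an uncertainty relation between a single function $g$ and a single LCT $\mathcal{F}_\mathcal{C}[g]$, at the cost of having to identify the relevant block of $\mathcal{C}$ at the very end. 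Since $f$ is a normalized wave function and LCTs are unitary, all the squared moduli involved are bona fide probability densities.

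Next I would unfold the decomposition $\mathcal{F}_\mathcal{C}=C_{db^{-1}}D_{b^{-1}}\mathcal{F}\,C_{b^{-1}a}$ from Eq.~(\ref{FRFT}), where $a,b,c,d$ now denote the blocks of $\mathcal{C}$. Because chirp multiplication leaves the modulus untouched (Property~4), the leftmost chirp $C_{db^{-1}}$ disappears from $|\mathcal{F}_\mathcal{C}[g]|$, while the rightmost chirp gives $|C_{b^{-1}a}[g]|=|g|$. What remains is the Fourier transform $\mathcal{F}$ sandwiched between a dilation and entropy-preserving chirps. The dilation $D_{b^{-1}}$ merely rescales the density, and a one-line change of variables shows that it shifts the entropy by a constant: $h(|D_{b^{-1}}\varphi|^2)=h(|\varphi|^2)+\ln|\det b|$ for any $\varphi$.

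Putting these observations together, I would write $h(|g|^2)+h(|\mathcal{F}_\mathcal{C}[g]|^2)=h(|w|^2)+h(|\mathcal{F}[w]|^2)+\ln|\det b|$, where $w=C_{b^{-1}a}[g]$ satisfies $|w|=|g|$. The two terms $h(|w|^2)+h(|\mathcal{F}[w]|^2)$ form exactly a canonically conjugate Fourier pair, so the Bialynicki-Birula--Mycielski relation~(\ref{birula}) bounds them below by $n\ln(\pi e)$. Translating back through $h(|g|^2)=h(|\mathcal{F}_\mathcal{A}[f]|^2)$ and $h(|\mathcal{F}_\mathcal{C}[g]|^2)=h(|\mathcal{F}_\mathcal{B}[f]|^2)$ then yields $h(|\mathcal{F}_\mathcal{A}[f]|^2)+h(|\mathcal{F}_\mathcal{B}[f]|^2)\geq\ln\!\big((\pi e)^n|\det b|\big)$, with $b=\mathcal{C}_b$ the upper-right block of $\mathcal{C}=\mathcal{B}\mathcal{A}^{-1}$.

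The last and most delicate step is to express $\mathcal{C}_b$ in terms of the original blocks of $\mathcal{A}$ and $\mathcal{B}$. Using the symplectic inversion formula $\mathcal{A}^{-1}=\begin{pmatrix}\mathcal{A}_d^T&-\mathcal{A}_b^T\\-\mathcal{A}_c^T&\mathcal{A}_a^T\end{pmatrix}$, which follows directly from $\mathcal{A}J\mathcal{A}^T=J$, the upper-right block of $\mathcal{B}\mathcal{A}^{-1}$ comes out as $\mathcal{C}_b=\mathcal{B}_b\mathcal{A}_a^T-\mathcal{B}_a\mathcal{A}_b^T$, reproducing exactly the determinant in Eq.~(\ref{EURFRFT}). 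I expect the main obstacle to be bookkeeping rather than conceptual: one must verify that the block $b$ of $\mathcal{C}$ is invertible so that the decomposition~(\ref{FRFT}) is legitimate, and track the several transposes carefully so that the commutator block emerges in the stated form. The entropy-scaling identity for $D_{b^{-1}}$ and the reduction to a genuine Fourier pair are what let the known one-line bound~(\ref{birula}) carry the entire argument.
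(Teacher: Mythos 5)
Your proposal is correct and takes essentially the same route as the paper's own proof: both arguments reduce the statement to the Bialynicki-Birula--Mycielski relation~(\ref{birula}) by factoring the relative transform $\mathcal{B}\mathcal{A}^{-1}$ (your $\mathcal{C}$, the paper's $\mathcal{S}_\ominus$) into chirps, a dilation, and a Fourier transform via Eq.~(\ref{FRFT}), exploiting that chirps preserve the modulus while the dilation shifts the entropy by $\ln|\det b|$, and both extract the block $\mathcal{C}_b=\mathcal{B}_b\mathcal{A}_a^T-\mathcal{B}_a\mathcal{A}_b^T$ from the same symplectic inversion formula for $\mathcal{A}^{-1}$. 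The only difference is organizational: you run the decomposition forward from $\mathcal{F}_\mathcal{B}[f]=\mathcal{F}_\mathcal{C}\bigl[\mathcal{F}_\mathcal{A}[f]\bigr]$, whereas the paper builds the auxiliary Fourier pair $(G,g)$ out of $\mathcal{F}_\mathcal{A}$ and then rescales $g$ to recognize $\mathcal{F}_{\mathcal{S}_\ominus\mathcal{A}}$ --- the same argument read in the opposite direction.
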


\begin{proof}
Let us define the function
\begin{equation}
G\left(\vec{x}\right)
=C_{-b^{-1}a} \, \mathcal{F}_{\mathcal{A}}(\vec x)
=e^{-\frac{i}{2}\vec{x}^{T}(b^{-1}a)\vec{x}} \, \mathcal{F}_{\mathcal{A}}\left(\vec x\right).
\end{equation}
The inverse Fourier transform of $G(\vec{x})$ is
\begin{equation}
g\left(\vec{p}\right)=\mathcal{F}^{-1}[G\left(\vec{x}\right)](\vec p)=\left[D_{-1}\mathcal{F}\,G\right]\left(\vec{p}\right),
\end{equation}
where the second equality results from property 3.
Since $\left|G\left(\vec{x}\right)\right|^2=\left|\mathcal{F}_{\mathcal{A}}\left(\vec{x}\right)\right|^2$, the probability distributions are equal, so that  $h\left(\left|G\left(\vec x \right)\right|^{2}\right)=h\left(\left|\mathcal{F}_{\mathcal{A}}\left(\vec{x}\right)\right|^{2}\right)$. Then, we may apply Eq.~(\ref{birula}) to $G$ and $g$, which gives
\begin{equation}
h\left(\left|\mathcal{F}_{\mathcal{A}}\left(\vec{x}\right)\right|^{2}\right)+h\left(\left|g\left(\vec{p}\right)\right|^{2}\right)\geq n\ln(\pi e).
\label{EURgetF}
\end{equation}
With the change of variables $\vec p\rightarrow b^{-1}\vec{p}$, we have
\begin{eqnarray}
h\left(\left|g\left(\vec{p}\right)\right|^{2}\right)&=&-\int \left|g\left(\vec{p}\right)\right|^2\ln \left|g\left(\vec{p}\right)\right|^{2}d\vec p\nonumber\\
&=& -\int \left|g\left(b^{-1}\vec{p}\right)\right|^2\ln \left|g\left(b^{-1}\vec{p}\right)\right|^{2}\frac{d\vec p}{|\det(b)|}. \qquad
\label{eq19}
\end{eqnarray}
By using the above properties of LCTs, we have
\begin{eqnarray}
\left|g\left(b^{-1}\vec{p}\right)\right|^{2} 
& = & \left|\sqrt{\det(b)}\; D_{b^{-1}}\left[g\right]\left(\vec{p}\right)\right|^{2}\nonumber\\
& = & |\det(b)| \; \big|\left[D_{b^{-1}}D_{-1}\mathcal{F}\,G\right]\left(\vec{p}\right)\big|^{2}\nonumber\\
& = & |\det(b)| \; \big|\left[D_{b^{-1}}D_{-1}\mathcal{F}\,C_{-b^{-1}a}\mathcal{F}_{\mathcal{A}}\right]\left(\vec{p}\right)\big|^{2}\nonumber\\
& = & |\det(b)| \; \big|\left[C_{-db^{-1}}D_{{-b^{-1}}}\mathcal{F}\,C_{-b^{-1}a}\mathcal{F}_{\mathcal{A}}\right]\left(\vec{p} \right)\big|^{2}\nonumber\\
& = & |\det(b)| \; \left|\left[\mathcal{F}_{\mathcal{S}_\ominus}\mathcal{F}_{\mathcal{A}}\right]\left(\vec{p}\right)\right|^{2}\nonumber\\
&=&|\det(b)|  \left|\left[\mathcal{F}_{\mathcal{S}_\ominus\mathcal{A}}\right]\left(\vec{p}\right)\right|^{2},
\end{eqnarray}
where we have defined $\mathcal{S}_\ominus=\left(\begin{array}{cc}
a & -b\\
-c & d
\end{array}\right)$, so by plugging it into Eq.~(\ref{eq19}), we get
\begin{eqnarray}
	h\left( | g\left(\vec{p}\right)|^{2}\right)
	&=&	h\left(	|\mathcal{F}_{\mathcal{S}_\ominus\mathcal{A}}\left(\vec{p}\right)|^2\right)-\ln|\det(b)|
\end{eqnarray}
since $|\mathcal{F}_{\mathcal{S}_\ominus\mathcal{A}}|^2$ is a normalized function.
Now, replacing $h\left( | g\left(\vec{p}\right)|^{2}\right)$ in Eq.~(\ref{EURgetF}), we obtain 
\begin{equation}
h\left(\left|\mathcal{F}_{\mathcal{A}}\left(\vec{x}\right)\right|^{2}\right)+h\left(	|\mathcal{F}_{\mathcal{S}_\ominus\mathcal{A}}\left(\vec{p}\right)|^2\right)\geq \ln((\pi e)^n |\det(b)|).
\label{eq22}
\end{equation}
The last step is simply to define $\mathcal{B}=\mathcal{S}_\ominus\mathcal{A}$ or equivalently $\mathcal{S}_\ominus=\mathcal{B}\mathcal{A}^{-1}$. 
Since symplectic matrices form a group and $\mathcal{A}$ and $\mathcal{B}$ are symplectic, $\mathcal{S}_\ominus$ is necessarily symplectic too.
The property that $\mathcal{A}$ is symplectic translates into
\begin{equation}
\mathcal{A}^{-1}=J\mathcal{A}^TJ^T
=\begin{pmatrix}
\mathcal{A}_d^T&-\mathcal{A}_b^T\\-\mathcal{A}_c^T&\mathcal{A}_a^T
\end{pmatrix},
\end{equation}
hence $b=\mathcal{B}_a\mathcal{A}_b^T - \mathcal{B}_b\mathcal{A}_a^T$.
Replacing $b$ into Eq.~(\ref{eq22}) completes the proof of Eq.~(\ref{EURFRFT}), which thus provides a $n$-dimensional entropic uncertainty relation for any two incompatible LCTs.
\end{proof}

Note that in the special case of one mode ($n=1$), we recover the result obtained by Guanlei et al. \cite{Guanlei2} for one-dimensional LCTs, namely
\begin{equation}
h(|\mathcal{F}_\mathcal{A}|^2)+h(|\mathcal{F}_\mathcal{B}|^2)\geq\ln\left(\pi e |a b' - a' b| \right)
\end{equation}
for two $2\times2$ matrices $\mathcal{A}=\begin{pmatrix}a&b\\c&d\end{pmatrix}$ and $\mathcal{B}=\begin{pmatrix}a'&b'\\c'&d'\end{pmatrix}$ (see Eq.~(20) in Ref.~\cite{Guanlei2}). 
Furthermore, for one-dimensional FRFTs (when $\mathcal{A}$ and $\mathcal{B}$ are simply rotations), we recover Eq.~(\ref{eq-Guanlei}) (see Eq.~(15) in Ref.~\cite{Guanlei}).
Now, back to the $n$-mode case, if we choose $\mathcal{A}=\mathds{1}$ and $\mathcal{B}$ being the direct sum of $\pi/2$ rotations on each modes (i.e., the usual $n$-dimensional Fourier transform), then $\mathcal{A}_a=\mathds{1}$, $\mathcal{A}_b=0$, $\mathcal{B}_a=0$, and $\mathcal{B}_b=\mathds{1}$, so that $\mathcal{B}_b\mathcal{A}_a^T-\mathcal{B}_a\mathcal{A}_b^T=\mathds{1}$. Hence, we get back to the original entropic uncertainty relation of Bialynicki-Birula and Mycielski, Eq.~(\ref{birula}). Finally, if we consider twice the same measurement, i.e., $\mathcal{A}=\mathcal{B}$, then
\begin{eqnarray}
\mathcal{S}_\ominus&=&\mathcal{A}\mathcal{A}^{-1}=\begin{pmatrix}
\mathcal{A}_a&\mathcal{A}_b\\\mathcal{A}_c&\mathcal{A}_d
\end{pmatrix} \begin{pmatrix}
\mathcal{A}_d^T&-\mathcal{A}_b^T\\-\mathcal{A}_c^T&\mathcal{A}_a^T
\end{pmatrix}\nonumber\\
&=&\begin{pmatrix}
\mathcal{A}_a\mathcal{A}_d^T-\mathcal{A}_b\mathcal{A}_c^T&\,\,
-\mathcal{A}_a\mathcal{A}_b^T+\mathcal{A}_b\mathcal{A}_a^T\\
\mathcal{A}_c\mathcal{A}_d^T-\mathcal{A}_d\mathcal{A}_c^T&\,\,
-\mathcal{A}_c\mathcal{A}_b^T+\mathcal{A}_d\mathcal{A}_a^T
\end{pmatrix}.\quad
\end{eqnarray}
But since  $\mathcal{S}_\ominus=\mathds{1}$, we have $\mathcal{A}_b\mathcal{A}_a^T-\mathcal{A}_a\mathcal{A}_b^T=~0$, so that the lower bound in Eq.~(\ref{EURFRFT}) is $-\infty$. This means that we have no lower limit on the entropy $h(|\mathcal{F}_\mathcal{A}|^2)$ so the probability distribution $|\mathcal{F}_\mathcal{A}|^2$ can be arbitrarily narrow, as expected.

Interestingly, in the special case where $\mathcal{A}=\openone$, it is possible to find a simpler alternative proof of Lemma~\ref{theoFRFT}.
We define 
\begin{equation}
\mathcal{S}=\begin{pmatrix}
\mathcal{B}_a&\mathcal{B}_b\\-(\mathcal{B}_b^{-1})^T&0
\end{pmatrix}.
\label{defS}
\end{equation}
and may easily check that $\mathcal{S}$ is a symplectic matrix by verifying that $\mathcal{S}J\mathcal{S}^T=J$. Indeed
\begin{equation}
\mathcal{S}J\mathcal{S}^T=\begin{pmatrix}
-\mathcal{B}_b\mathcal{B}_a^T+\mathcal{B}_a\mathcal{B}_b^T&\mathcal{B}_b\mathcal{B}_b^{-1}\\-(\mathcal{B}_b^{-1})^T\mathcal{B}_b^T&0 
\end{pmatrix} ,
\end{equation}
is equal to $J$ since $\mathcal{B}_a\mathcal{B}_b^T$ is a symmetric matrix and $\det(\mathcal{B}_b)\ne 0$ (as $\mathcal{B}$ is also a symplectic matrix).
Thus, $\mathcal{S}$ transforms $\hat r$ into a new vector of quadratures,
\begin{equation}
\mathcal{S}\hat r=\begin{pmatrix}
\vec{z}\\-(\mathcal{B}_b^{-1})^T \vec{x}
\end{pmatrix}
\end{equation}
where $\vec z$ is the vector of position quadratures in $\hat r_B$ [see Eq.~(\ref{defyz})].
Since $\mathcal{S}$ is symplectic, $\vec{z}$ and $-(\mathcal{B}_b^{-1})^T \vec{x}$ are two vectors of canonically conjugate quadratures, which we may plug into Eq.~(\ref{birula}), giving 
\begin{equation}
h\left(-(\mathcal{B}_b^{-1})^T \vec{x}\right) + h(\vec{z})  \geq n\ln (\pi e).
\label{eqbx}
\end{equation}
By using the scaling property of the differential entropy, we have $h(-(\mathcal{B}_b^{-1})^T \vec{x})=h(\vec{x})+\ln(|\det ((\mathcal{B}_b^{-1})^T )|)$, so that Eq.~(\ref{eqbx}) becomes
\begin{equation}
h(\vec{x})+h(\vec{z})\geq \ln \left((\pi e)^n|\det(\mathcal{B}_b)|\right).
\label{notgeneralEUR}
\end{equation}
Since the probability distribution of $\vec x$ is  
 $\left| \mathcal{F}_\mathcal{\openone}\right|^2$ and that of $\vec z$ is $\left| \mathcal{F}_\mathcal{B}\right|^2$,
we recover Lemma~\ref{theoFRFT} when $\mathcal{A}=\mathds{1}$.


\subsection{Entropic uncertainty relation based on a matrix of commutators}

Lemma~\ref{theoFRFT} provides an entropic uncertainty relation for any two $n$-dimensional LCTs, $\mathcal{F}_\mathcal{A}$ and $\mathcal{F}_\mathcal{B}$. As we show in the following theorem, this uncertainty relation can also be expressed in terms of a matrix of commutators between the measured variables.  This is our main result.

\begin{theo}	
	\label{theoFRFTcom}
	Let $\vec y = (\hat y_1,\cdots \hat y_n)^T$ be a vector of commuting quadratures and $\vec z = (\hat z_1,\cdots \hat z_n)^T$ be another vector of commuting quadratures. Let the components of $\vec y$ and $\vec z$ be written each as a linear combination of the $(\hat x, \hat p)$ quadratures of a $n$-modal system, namely
\begin{eqnarray}
\hat y_i &=& \sum_{k=1}^{n} a_{i,k} \, \hat x_k  + \sum_{k=1}^{n} a'_{i,k} \, \hat p_k  \qquad (i=1,\cdots n) \nonumber \\
\hat z_j &=& \sum_{k=1}^{n} b_{j,k} \, \hat x_k  + \sum_{k=1}^{n} b'_{j,k} \, \hat p_k   \qquad (j=1,\cdots n).
\end{eqnarray}
Then, the probability distributions of the vectors of jointly measured quadratures $\hat y_i$'s or $\hat z_j$'s satisfy the entropic uncertainty relation 
	\begin{equation}
	h(\vec y)+h(\vec z)\geq\ln\left((\pi e)^n |\det K|\right)
	\label{EURFRFTcom}
	\end{equation}
	where  $K_{ij}=[\hat y_i,\hat z_j]$ denotes the $n\times n$ matrix of commutators (which are scalars) and $h(\cdot)$ denotes Shannon differential entropy.
\end{theo}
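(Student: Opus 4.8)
The plan is to reduce Theorem~\ref{theoFRFTcom} to Lemma~\ref{theoFRFT} by identifying the measured quadratures $\vec y$ and $\vec z$ with the position outputs of two suitable LCTs, and then showing that the commutator matrix $K$ coincides (up to sign) with the matrix $\mathcal{B}_b\mathcal{A}_a^T - \mathcal{B}_a\mathcal{A}_b^T$ appearing in the lemma's bound. First I would encode the given linear combinations in matrix form: writing $\vec y = \mathcal{A}_a\,\vec x + \mathcal{A}_b\,\vec p$ with $(\mathcal{A}_a)_{ik}=a_{i,k}$ and $(\mathcal{A}_b)_{ik}=a'_{i,k}$, and similarly $\vec z = \mathcal{B}_a\,\vec x + \mathcal{B}_b\,\vec p$ with the $b,b'$ coefficients. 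The hypothesis that the $\hat y_i$ commute among themselves, and likewise the $\hat z_j$, is exactly the statement that each of $\vec y$ and $\vec z$ can be completed to a full set of canonical coordinates — that is, the rows $(\mathcal{A}_a\,|\,\mathcal{A}_b)$ form the ``position half'' of a symplectic matrix $\mathcal{A}$, and similarly for $\mathcal{B}$. So the first real step is to justify that such symplectic completions $\mathcal{A}$ and $\mathcal{B}$ exist.

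The key computation is the commutator. Using the canonical relations $[\hat x_k,\hat p_\ell]=i\,\delta_{k\ell}$ (with $[\hat x_k,\hat x_\ell]=[\hat p_k,\hat p_\ell]=0$), I would expand
\begin{equation}
[\hat y_i,\hat z_j] = \Big[\sum_k a_{i,k}\hat x_k + a'_{i,k}\hat p_k,\ \sum_\ell b_{j,\ell}\hat x_\ell + b'_{j,\ell}\hat p_\ell\Big].
\end{equation}
Only the cross terms survive, giving
\begin{equation}
K_{ij}=[\hat y_i,\hat z_j] = i\sum_k \big(a_{i,k}\,b'_{j,k} - a'_{i,k}\,b_{j,k}\big) = i\,\big(\mathcal{A}_a\mathcal{B}_b^T - \mathcal{A}_b\mathcal{B}_a^T\big)_{ij}.
\end{equation}
Hence $K = i\,(\mathcal{A}_a\mathcal{B}_b^T - \mathcal{A}_b\mathcal{B}_a^T)$, whose transpose is $-i\,(\mathcal{B}_b\mathcal{A}_a^T - \mathcal{B}_a\mathcal{A}_b^T)$. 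Since $|\det K| = |\det K^T|$ and the factors of $i$ and the sign contribute only a unit-modulus factor, we get $|\det K| = |\det(\mathcal{B}_b\mathcal{A}_a^T - \mathcal{B}_a\mathcal{A}_b^T)|$, which is precisely the matrix whose determinant bounds the entropy sum in Eq.~(\ref{EURFRFT}).

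Once this identification is in place, the proof closes immediately: the probability distribution of the jointly measured $\hat y_i$'s is $|\mathcal{F}_\mathcal{A}[\psi]|^2$ and that of the $\hat z_j$'s is $|\mathcal{F}_\mathcal{B}[\psi]|^2$ (as established in the discussion preceding the lemma, via Eq.~(\ref{defyz})), so $h(\vec y)=h(|\mathcal{F}_\mathcal{A}|^2)$ and $h(\vec z)=h(|\mathcal{F}_\mathcal{B}|^2)$, and Lemma~\ref{theoFRFT} gives exactly Eq.~(\ref{EURFRFTcom}). I expect the main obstacle to be the existence of the symplectic completion: one must argue that any rectangular block $(\mathcal{A}_a\,|\,\mathcal{A}_b)$ whose rows satisfy the Lagrangian (mutual-commutation) condition $\mathcal{A}_a\mathcal{A}_b^T = \mathcal{A}_b\mathcal{A}_a^T$ can be extended to a full $2n\times 2n$ symplectic matrix. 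This is a standard fact (any isotropic set of canonical coordinates extends to a symplectic basis), but it is the one structural input beyond the routine commutator algebra, and the cleanest way to invoke it is to note that the commutation hypothesis makes $\vec y$ a set of $n$ independent, mutually commuting quadratures, which therefore defines a valid Gaussian projective measurement and hence a Gaussian unitary $U_A$ with an associated symplectic $\mathcal{A}$.
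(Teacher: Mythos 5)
Your proposal is correct and follows essentially the same route as the paper's own proof: both reduce Theorem~\ref{theoFRFTcom} to Lemma~\ref{theoFRFT} by identifying the distributions of $\vec y$ and $\vec z$ with $|\mathcal{F}_\mathcal{A}|^2$ and $|\mathcal{F}_\mathcal{B}|^2$ and computing the commutator matrix $K=i\left(\mathcal{A}_a\mathcal{B}_b^T-\mathcal{A}_b\mathcal{B}_a^T\right)$, so that $|\det K|=|\det(\mathcal{B}_b\mathcal{A}_a^T-\mathcal{B}_a\mathcal{A}_b^T)|$ as required. Two minor remarks: the transpose should read $K^T=+i\left(\mathcal{B}_b\mathcal{A}_a^T-\mathcal{B}_a\mathcal{A}_b^T\right)$ (transposition does not conjugate the scalar $i$), which is immaterial since only $|\det K|$ enters the bound; and your explicit justification that the commutation hypothesis allows $(\mathcal{A}_a\,|\,\mathcal{A}_b)$ and $(\mathcal{B}_a\,|\,\mathcal{B}_b)$ to be completed to symplectic matrices is a structural point the paper's proof takes for granted.
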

\begin{proof}
Since the quadratures $\hat y_i$ commute, $[\hat y_i,\hat y_j]=0$, $\forall i,j$,  they can be jointly measured, and similarly for the $\hat z_j$'s. Thus, the $n$ measured quadratures correspond here to the output of $\mathcal{F}_\mathcal{A}$ or $\mathcal{F}_\mathcal{B}$ described by the symplectic matrix $\mathcal{A}$ or $\mathcal{B}$, as defined in Eq.~(\ref{matAB}). We simply have to compute the commutator between quadrature $\hat y_i$ (at the output of $\mathcal{F}_\mathcal{A}$) and $\hat z_j$ (at the output of $\mathcal{F}_\mathcal{B}$):
\begin{eqnarray}
K_{ji}&=&[\hat y_j,\hat z_i]\nonumber\\
&=&\sum_{k=1}^{2n}\sum_{m=1}^{2n}\mathcal{A}_{jk}\mathcal{B}_{im}[\hat r_k,\hat r_m]\nonumber\\
&=&i\sum_{m=1}^{2n}\left(\sum_{k=1}^{n}\mathcal{A}_{jk}\mathcal{B}_{im}\delta_{m,k+n}-\sum_{k=n+1}^{2n}\mathcal{A}_{jk}\mathcal{B}_{im}\delta_{m,k-n}\right)\nonumber\\
&=&i\left(\sum_{k=1}^{n}\mathcal{A}_{jk}\mathcal{B}_{i,k+n}-\sum_{k=n+1}^{2n}\mathcal{A}_{jk}\mathcal{B}_{i,k-n}\right)\nonumber\\
&=&i\left(\sum_{k=1}^{n}\mathcal{A}_{jk}\mathcal{B}_{i,k+n}-\mathcal{A}_{j,k+n}\mathcal{B}_{ik}\right)\nonumber\\
&=&i\left(\sum_{k=1}^{n}(\mathcal{A}_a)_{jk}(\mathcal{B}_b)_{ik}-(\mathcal{A}_b)_{jk}(\mathcal{B}_a)_{ik}\right)\nonumber\\
&=& i\left(\mathcal{B}_b\mathcal{A}_a^T-\mathcal{B}_a\mathcal{A}_b^T\right)_{ij}
\end{eqnarray}
Using Lemma~\ref{theoFRFT}, we know that the probability distributions $|\mathcal{F}_\mathcal{A}(\vec y)|^2$ and $|\mathcal{F}_\mathcal{B}(\vec z)|^2$  satisfy the entropic uncertainty relation, Eq.~(\ref{EURFRFT}). Since $\mathcal{B}_b\mathcal{A}_a^T-\mathcal{B}_a\mathcal{A}_b^T=-iK^T$, we conclude that $|\det(\mathcal{B}_b\mathcal{A}_a^T-~\mathcal{B}_a\mathcal{A}_b^T)|=|\det(K)|$, which concludes the proof of Eq.~(\ref{EURFRFTcom}).
\end{proof}

We now show that this result holds even if we jointly measure $n$ quadratures on a larger-dimensional system.

\begin{theobis}	{\rm (Extended version.)}
	Let $\vec y_n = ~(\hat y_1,\cdots \hat y_n)^T$ be a vector of commuting quadratures and $\vec z_n = ~(\hat z_1,\cdots \hat z_n)^T$ be another vector of commuting quadratures. Let each components of the
	$\vec y_n$ and $\vec z_n$ be written  as a linear combination of the $(\hat x, \hat p)$ quadratures of a $N$-modal system with $N>n$, namely
\begin{eqnarray}
\hat y_i &=& \sum_{k=1}^{N} a_{i,k} \, \hat x_k  + \sum_{k=1}^{N} a'_{i,k} \, \hat p_k  \qquad (i=1,\cdots n) \nonumber \\
\hat z_j &=& \sum_{k=1}^{N} b_{j,k} \, \hat x_k  + \sum_{k=1}^{N} b'_{j,k} \, \hat p_k   \qquad (j=1,\cdots n)
\end{eqnarray}
Then, the probability distributions of the vectors of jointly measured quadratures $\hat y_i$'s or $\hat z_j$'s satisfy the entropic uncertainty relation 
	\begin{equation}
	h(\vec y_n)+h(\vec z_n)\geq\ln\left((\pi e)^n |\det K|\right)
	\label{EURFRFTcombis}
	\end{equation}
	where  $K_{ij}=[\hat y_i,\hat z_j]$ denotes the $n\times n$ matrix of commutators (which are scalars) and $h(\cdot)$ denotes Shannon differential entropy.
\end{theobis}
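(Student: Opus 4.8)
The plan is to generalize the alternative proof of Lemma~\ref{theoFRFT} (the one built around Eqs.~(\ref{eqbx})--(\ref{notgeneralEUR})) rather than to reduce to the $n$-mode Theorem~\ref{theoFRFTcom}. If $\det K=0$ the bound is $-\infty$ and there is nothing to prove, so I assume $K$ invertible throughout. Writing $K=i\tilde K$ with $\tilde K$ a real $n\times n$ matrix (each commutator of two Hermitian quadratures being $i$ times a real scalar), the idea is to manufacture, out of the $\hat y_i$'s alone, a set of $n$ quadratures $\hat\pi_1,\dots,\hat\pi_n$ that are canonically conjugate to the $\hat z_j$'s, and then feed the pair $(\vec z,\vec\pi)$ into the Bialynicki-Birula--Mycielski relation Eq.~(\ref{birula}).

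Concretely, I would set $\vec\pi=M\vec y$ and solve for the $n\times n$ matrix $M$ by imposing $[\hat z_i,\hat\pi_j]=i\,\delta_{ij}$. Since $[\hat z_i,\hat y_l]=-K_{li}$, this reads $MK=-i\,\mathds{1}$, i.e. $M=-iK^{-1}=-\tilde K^{-1}$, which is real, so $\vec\pi$ is a genuine quadrature vector. The remaining canonical relations come for free: $[\hat\pi_i,\hat\pi_j]=0$ because the $\hat y$'s all commute, and $[\hat z_i,\hat z_j]=0$ by hypothesis. Hence $(\hat z_1,\dots,\hat z_n,\hat\pi_1,\dots,\hat\pi_n)$ is a system of $n$ canonically conjugate pairs. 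Their matrix of symplectic products is exactly $J$ restricted to $n$ modes, so they are linearly independent and span a $2n$-dimensional symplectic subspace of the $2N$-dimensional phase space; by symplectic Gram--Schmidt (Witt's theorem) they extend to a full symplectic frame, i.e. they become the position and momentum quadratures of the first $n$ modes after a suitable $N$-mode Gaussian unitary.

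With this identification I would apply Eq.~(\ref{birula}) to the pair $(\vec z,\vec\pi)$, giving $h(\vec z)+h(\vec\pi)\ge n\ln(\pi e)$, and then use the scaling property of the differential entropy (as between Eqs.~(\ref{eqbx}) and~(\ref{notgeneralEUR})) to rewrite $h(\vec\pi)=h(M\vec y)=h(\vec y)+\ln|\det M|=h(\vec y)-\ln|\det K|$, where I used $|\det M|=|\det\tilde K|^{-1}=|\det K|^{-1}$. Substituting yields $h(\vec y)+h(\vec z)\ge n\ln(\pi e)+\ln|\det K|=\ln((\pi e)^n|\det K|)$, which is Eq.~(\ref{EURFRFTcombis}). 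Note that this argument contains the case $N=n$ (hence reproves Theorem~\ref{theoFRFTcom}) and, for $\mathcal{A}=\mathds{1}$, reduces verbatim to the alternative proof of Lemma~\ref{theoFRFT}, with $M=-(\mathcal{B}_b^{-1})^T$.

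The main subtlety---and the only genuinely new point compared with the square case---is the application of Eq.~(\ref{birula}) in the last step. Because $n<N$, the pair $(\vec z,\vec\pi)$ lives on only $n$ of the $N$ modes, so the relevant reduced state is in general mixed even when $\ket{\psi}$ is pure. I would therefore need the version of the Bialynicki-Birula--Mycielski inequality valid for arbitrary (mixed) states; this is standard and follows from Eq.~(\ref{birula}) by concavity of the differential entropy, since the position (resp. momentum) distribution of a mixture is the mixture of the corresponding distributions. Checking that $\vec\pi$ is real and that the completion to a symplectic frame always exists when $\det K\ne 0$ are the other points to verify, but both are routine once $M=-\tilde K^{-1}$ is in hand.
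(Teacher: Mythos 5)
Your proof is correct, and it runs on the same engine as the paper's own proof---build from the $\hat y_i$'s an $n$-tuple canonically conjugate to the $\hat z_j$'s, apply the Bialynicki-Birula--Mycielski relation Eq.~(\ref{birula}) to the reduced $n$-mode state, and convert the entropy of the conjugate tuple back to $h(\vec y)$ via the scaling property---but your route to that conjugate tuple is genuinely leaner. The paper spends most of its effort on two preparatory steps: a symplectic-invariance argument normalizing $\hat y_i=\hat x_i$, followed by an explicit completion of the upper blocks of $\mathcal{B}$ (choosing $D=(C-C')^T(B')^{-T}$ so that $\mathcal{B}$ stays symplectic), after which $\vec z_n$ and $-(B')^{-T}\vec x_n$ are canonically conjugate and $|\det K|=|\det B'|$. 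You bypass both steps with the frame-independent formula $\vec\pi=-\tilde K^{-1}\vec y$ (which, in the paper's normalized frame where $\tilde K=(B')^T$, is exactly the paper's conjugate tuple) and delegate the existence of the completing $2N\times 2N$ symplectic matrix to the symplectic Gram--Schmidt/Witt extension theorem, rather than writing it down. Your version also makes explicit two points the paper leaves implicit: the reduced state of the first $n$ modes is generally mixed, so Eq.~(\ref{birula}) must be upgraded to mixed states (your concavity-of-entropy argument is the standard fix, and the paper silently relies on the same fact when it applies Eq.~(\ref{birula}) ``on the reduced state''), and the degenerate case $\det K=0$ is disposed of at the outset, whereas the paper needs $\det K\neq 0$ for $(B')^{-1}$ to exist and only notes this in passing. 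What the paper's presentation buys in exchange is full explicitness: every matrix entry is constructed by hand, so no abstract extension theorem needs to be invoked.
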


\begin{proof}
The $N$-dimensional vectors $\vec y$ and  $\vec z$ can be decomposed as 
\begin{eqnarray}
\vec y = \begin{pmatrix}
\vec y_n \\ \vec y_{>}
\end{pmatrix},
\qquad
\vec z = \begin{pmatrix}
\vec z_n \\ \vec z_{>}
\end{pmatrix},
\end{eqnarray}
with $\vec y_n=(y_1,\cdots y_n)^T$ or $\vec z_n=(z_1,\cdots z_n)^T$ being the $n$ measured quadratures, while 
$\vec y_{>}=(y_{n+1},\cdots y_N)^T$ or $\vec z_{>}=(z_{n+1},\cdots z_N)^T$ are being traced over.
We write
\begin{eqnarray}
\hat y_i = \sum_{k=1}^{2N} \mathcal{A}_{i,k} \, \hat r_k  \qquad 
\hat z_j = \sum_{k=1}^{2N} \mathcal{B}_{j,k} \, \hat r_k  
\end{eqnarray}
with $i,j=1,\cdots n$, which generalizes Eq.~(\ref{defyz}) in the case where $\hat r$ is a $2N$-dimensional vector and 
$\mathcal{A}$ and $\mathcal{B}$ are $2N\times 2N$ symplectic matrices (we only need to specify  the upper block of size $n\times 2N$ of $\mathcal{A}$ and $\mathcal{B}$, which defines $\hat y_i$ or $\hat z_j$, and complete the matrices by ensuring that they remain symplectic).

We first note that the right-hand side term of Eq.~(\ref{EURFRFTcombis}) is invariant under symplectic transformations (if both symplectic matrices $\mathcal{A}$ or $\mathcal{B}$ are multiplied by a same symplectic matrix). Indeed, the commutation relations are preserved along symplectic transformations and the determinant is invariant under permutations (the order of the quadratures is irrelevant), hence $\det(K)$ is invariant. Thus, we may always apply some symplectic transformation on the $N$ modes so that the measured quadratures in the first case are $y_i=x_i$, with $i=1,\cdots n$. The two upper blocks of matrix $\mathcal{A}$ are then given by
\begin{equation}
\mathcal{A}_a=\begin{pmatrix}
\mathds{1}_{n\times n}&0_{n\times (N-n)}\\
\cdots &\cdots
\end{pmatrix}_{N\times N}
\end{equation}
and 
\begin{equation}
\mathcal{A}_b=\begin{pmatrix}
0_{n\times n}&0_{n\times (N-n)}\\
\cdots &\cdots
\end{pmatrix}_{N\times N}
\end{equation}
where we do not need to specify the matrix elements denoted with a dot.


Next, we may assume with no loss of generality that the two upper blocks of $\mathcal{B}$ are given by
\begin{equation}
\mathcal{B}_a=\begin{pmatrix}
B_{n \times n}&C_{n\times (N-n)}\\D_{(N-n) \times n}&\mathds{1}_{(N-n)\times (N-n)}
\end{pmatrix}_{N\times N}
\end{equation}
and
\begin{equation}
\mathcal{B}_b=\begin{pmatrix}
B'_{n \times n}&C'_{n\times (N-n)}\\0_{(N-n) \times n}&\mathds{1}_{(N-n)\times (N-n)}
\end{pmatrix}_{N\times N} ,
\end{equation}
with $B$ and $C$ containing all $b_{j,k}$ entries for $j=1,\cdots,n$ and $k=1,\cdots,N$, and $ B'$ and $C'$ containing all $b'_{j,k}$ entries for $j=1,\cdots,n$ and $k=1,\cdots,N$. 
This is the case because the last $N-n$ quadratures $\vec z_>$ are traced over, so they may be chosen arbitrarily as long as $\mathcal{B}$ remains symplectic.
This means that we must check that $\mathcal{B}_a\mathcal{B}_b^T$ is symmetric, which implies that
\begin{equation}
D=\left(C-C'\right)^T (B')^{-T}
\end{equation}
where $(\cdot)^{-T}$ stands for the inverse of the transpose of the matrix. 
Thus, the matrix $D$ can always be chosen in order to ensure that $\mathcal{B}$ is symplectic.
 It is easy to write the $n\times n$ restricted matrix of commutators of the measured quadratures $\vec y_n$ and $\vec z_n$ (the first $n$ quadratures of $\vec y$ and $\vec z$), giving $|\det K| = |\det B'|$, so the inverse of $B'$ is well defined as long as $\det K \ne 0$.


At this point, we only need to prove Eq.~(\ref{EURFRFTcombis}) in case the upper blocks of $\mathcal{A}$ and $\mathcal{B}$ are defined as above and 
 $|\det K|$ is replaced by $|\det B'|$. As before, we define the symplectic matrix
\begin{equation}
\mathcal{S}=\begin{pmatrix}
\mathcal{B}_a&\mathcal{B}_b\\-(\mathcal{B}_b)^{-T}&0
\end{pmatrix}.
\end{equation}
It transforms the vector of quadratures $\hat r$ into 
\begin{equation}
\mathcal{S}\, \hat r
=\begin{pmatrix}
\mathcal{B}_a \, \vec x + \mathcal{B}_b \, \vec p   \\  -(\mathcal{B}_b)^{-T} \vec x
\end{pmatrix}
=\begin{pmatrix}
\vec z_n   \\ \vec z_>  \\-(B')^{-T} \vec x_n \\ (C')^T (B')^{-T} \vec x_n - \vec x_>
\end{pmatrix}
\end{equation}
where $\vec x_n=(x_1,\cdots x_n)^T$, $\vec x_{>}=(x_{n+1},\cdots x_N)^T$.
This implies that  $\vec z_n$ and $-(B')^{-T} \vec x_n$ are canonically conjugate $n$-tuples, so that we may apply Eq.~(\ref{birula}) on the reduced state of the first $n$ modes, giving
\begin{equation}
h\left(-(B')^{-T} \vec{x_n}\right) + h(\vec{z_n})  \geq n\ln (\pi e).
\end{equation}
Using the scaling property of the differential entropy $h(-(B')^{-T} \vec{x_n})=h(\vec{x_n})+\ln(|\det (B')^{-T} |)$, we obtain
\begin{equation}
h(\vec x_n)+h(\vec z_n)\geq\ln((\pi e)^n |\det B'|)
\label{conjecture1}
\end{equation}
This implies Eq.~(\ref{EURFRFTcombis}), thus completing the proof of the extended version of Theorem~\ref{theoFRFTcom}.
\end{proof}

%

Interestingly, Eq.~(\ref{EURFRFTcombis}) coincides with Eq.~(\ref{huang}) in the special case $n=1$. Thus, our Theorem~\ref{theoFRFTcom}
can be viewed as an extension of the result by Huang \cite{huang} when we measure more than one mode ($n>1$). As already mentioned, we can check that if $\mathcal{A}=\mathds{1}$ and $\mathcal{B}$ is a direct sum of $\pi/2$-rotations on each modes (i.e., the usual Fourier transform), then $K=-i\mathds{1}$ and we recover Bialynicki-Birula and Mycielski relation, Eq.~(\ref{birula}).

\subsection{Extension to Rényi entropies}

The Shannon differential entropy is a special case of the family of Rényi differential entropies defined as
\begin{equation}
	h_\alpha(|f(\vec x)|^2)=\frac{1}{1-\alpha}\ln\left(\int d\vec x\,(|f(\vec x)|^{2})^\alpha\right)
\label{eq-def-Renyi}	
\end{equation}
when $\alpha\rightarrow1$. 
Let us now derive generalized entropic uncertainty relations for these entropies.
\begin{theo}
\label{theo-Renyi}	
Let $\vec y = (\hat y_1,\cdots \hat y_n)^T$ be a vector of commuting quadratures, $\vec z = (\hat z_1,\cdots \hat z_n)^T$ be another vector of commuting quadratures, and let the components of these vectors be written each as a linear combination of the $(\hat x, \hat p)$ quadratures of a $N$-modal system ($N \ge n$).
Then, the probability distributions of the vectors of jointly measured quadratures $\hat y_i$'s or $\hat z_j$'s satisfy the Rényi entropic uncertainty relation 
	\begin{eqnarray}
	h_\alpha(\vec y)+h_\beta(\vec z)&\geq&\frac{n\ln(\alpha)}{2\left(\alpha-1\right)}+\frac{n\ln(\beta)}{2\left(\beta-1\right)}\nonumber\\
	&& ~~ +n\ln(\pi)+ \ln\left|\det K \right|.
	\label{renyiEURFRFT}
	\end{eqnarray}
	where 
	\begin{equation}
\frac{1}{\alpha}+\frac{1}{\beta}=2,  \qquad \alpha>0, \qquad \beta>0,
\label{alphabeta}
	\end{equation}
	$K_{ij}=[\hat y_i,\hat z_j]$ is the matrix of commutators (which are scalars), and $h_\alpha(\cdot)$ is the Rényi differential entropy as defined in Eq.~(\ref{eq-def-Renyi}).
\end{theo}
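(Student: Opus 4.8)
The plan is to reproduce the two-step architecture used to prove the extended version of Theorem~\ref{theoFRFTcom}, replacing only Eq.~(\ref{birula}) by its Rényi analogue for canonically conjugate variables. Concretely, I would first establish, for any canonically conjugate $n$-tuples $(\vec x, \vec p)$ and any pair of orders satisfying Eq.~(\ref{alphabeta}), the relation
\begin{equation}
h_\alpha(\vec x) + h_\beta(\vec p) \geq \frac{n\ln\alpha}{2(\alpha-1)} + \frac{n\ln\beta}{2(\beta-1)} + n\ln\pi ,
\label{renyiconj}
\end{equation}
and then transport it to arbitrary commuting measured quadratures using the same symplectic change of variables and entropy-scaling argument already employed in Section~III.

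The analytic heart of the proof --- and the step I expect to be the main obstacle --- is Eq.~(\ref{renyiconj}). I would derive it from the sharp Hausdorff--Young (Babenko--Beckner) inequality \cite{Beckner} applied to the Fourier transform $\mathcal{F}$ in the symmetric normalization used in this paper. Setting $p=2\alpha\le 2$ and its H\"older conjugate $q=2\beta$ (the identity $1/p+1/q=1$ being precisely the constraint Eq.~(\ref{alphabeta})), the inequality reads $\|\mathcal{F}\psi\|_{2\beta}\le C^{\,n}\,\|\psi\|_{2\alpha}$ with $C$ the optimal constant. Rewriting each norm through Eq.~(\ref{eq-def-Renyi}) as $\ln\|\psi\|_{2\alpha}=\tfrac{1-\alpha}{2\alpha}\,h_\alpha(|\psi|^2)$ and similarly for $\mathcal{F}\psi$, then taking logarithms and rearranging while tracking the signs $\tfrac{1-\alpha}{2\alpha}\ge 0$ and $\tfrac{1-\beta}{2\beta}\le 0$, yields Eq.~(\ref{renyiconj}); the complementary range $\alpha\ge 1\ge\beta$ follows by exchanging $\psi$ and $\mathcal{F}\psi$, since Fourier inversion obeys the same bound. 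The delicate point is to match the constant: one must carry the Fourier normalization carefully so that $C^{\,n}$ collapses exactly to the $n\ln\pi$ term. A useful sanity check is that $\tfrac{\ln\gamma}{\gamma-1}\to 1$ as $\gamma\to 1$, so that Eq.~(\ref{renyiconj}) reduces to $h(\vec x)+h(\vec p)\ge n\ln(\pi e)$, recovering Eq.~(\ref{birula}) in the Shannon limit.

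The remaining two steps are inherited from the extended proof of Theorem~\ref{theoFRFTcom} and require no new idea. Because the statement must hold for every state, I may apply with no loss of generality a symplectic transformation on the $N$ modes turning the first measured quadratures into $\vec y_n=\vec x_n$; this preserves $\det K$, as commutators are invariant and the determinant is permutation-invariant. Constructing the same symplectic matrix $\mathcal{S}$ as there, one finds that $\vec z_n$ and $-(B')^{-T}\vec x_n$ are canonically conjugate, with $|\det B'|=|\det K|$. Applying Eq.~(\ref{renyiconj}) to this pair, with order $\alpha$ assigned to $-(B')^{-T}\vec x_n$ and order $\beta$ to $\vec z_n$, and then using the scaling property $h_\gamma(M\vec u)=h_\gamma(\vec u)+\ln|\det M|$ --- which holds for every order $\gamma$, as the change-of-variables factor $|\det M|^{1-\gamma}$ inside Eq.~(\ref{eq-def-Renyi}) shows --- converts the first entropy into $h_\alpha(\vec x_n)-\ln|\det B'|$. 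Substituting $|\det B'|=|\det K|$ and $\vec y_n=\vec x_n$ then produces Eq.~(\ref{renyiEURFRFT}). The only bookkeeping to be careful with here is the pairing of the orders $\alpha,\beta$ with $\vec y$ and $\vec z$; the $\alpha\leftrightarrow\beta$ symmetry of the first step guarantees that the required assignment is always available.
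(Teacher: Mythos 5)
Your proposal is correct and follows essentially the same route as the paper: the paper's proof likewise consists of rerunning the argument of Lemma~\ref{theoFRFT} and Theorem~\ref{theoFRFTcom} (extended version) with Eq.~(\ref{birula}) replaced by the R\'enyi uncertainty relation for canonically conjugate $n$-tuples, together with the R\'enyi scaling property $h_\alpha(S\vec{x})=h_\alpha(\vec{x})+\ln|\det S|$. The only difference is that the paper simply cites Bialynicki-Birula \cite{birula2} for that base relation, whereas you re-derive it from the sharp Babenko--Beckner inequality --- which is precisely how the cited result is proved, so this adds self-containedness without changing the approach.
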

\begin{proof}
The proof follows exactly the same steps as the proof of Lemma \ref{theoFRFT} and Theorem \ref{theoFRFTcom} (both versions) except that Eq.~(\ref{birula}) is replaced by its counterpart for Rényi entropies \cite{birula2}
\begin{equation}
h_\alpha(\vec x)+h_\beta(\vec p)\geq n\ln(\pi)+\frac{n\ln(\alpha)}{2\left(\alpha-1\right)}+\frac{n\ln(\beta)}{2\left(\beta-1\right)}
\end{equation}
for $(\alpha, \beta)$ satisfying Eq.~(\ref{alphabeta}). Note that Rényi entropies also verify the scaling property $h_\alpha(S\vec{x})=h_\alpha(\vec{x})+\ln|S|$.
\end{proof}
	As expected, in the limit where $\alpha\rightarrow1$ and $\beta\rightarrow1$, we recover our uncertainty relations for Shannon differential entropies. Also, in a one-dimensional case ($N=n=1$), Eq. (\ref{renyiEURFRFT}) coincides with the result found in \cite{Guanlei2}.
	
\subsection{Covariance-based uncertainty relation}

Finally, by exploiting Theorem \ref{theoFRFTcom}, it is also possible to derive an uncertainty relation in terms of covariance matrices. This can been viewed as a $n$-dimensional extension of the usual Robertson uncertainty relation in position and momentum spaces where, instead of expressing the complementarity between observables $\hat A$ and $\hat B$ (which are linear combinations of quadratures), namely
\begin{equation}
\Delta \hat A \, \Delta \hat B  \ge | [\hat A,\hat B] |  /2
\end{equation}
with $ [\hat A,\hat B]$ being a scalar, we consider the complementarity between two $n$-tuples of commuting observables.

\begin{theo}
\label{theo-covariance}
Let $\vec y = (\hat y_1,\cdots \hat y_n)^T$ be a vector of commuting quadratures, $\vec z = (\hat z_1,\cdots \hat z_n)^T$ be another vector of commuting quadratures, and let the components of these vectors be written each as a linear combination of the $(\hat x, \hat p)$ quadratures of a $N$-modal system ($N \ge n$).
Let $\gamma^\mathcal{A}_{ij}=\langle\{\hat y_i,\hat y_j\}\rangle/2-\langle\hat y_i\rangle\langle\hat y_j\rangle$ and $\gamma^\mathcal{B}_{ij} =~\langle\{\hat z_i,\hat z_j\}\rangle/2-\langle\hat z_i\rangle\langle\hat z_j\rangle$ be the (reduced) covariance matrices of the $\hat y_i$ and $\hat z_i$ quadratures. Then
	\begin{eqnarray}
\left( \det \gamma^\mathcal{A} \right)^{1\over 2} \, \left( \det \gamma^\mathcal{B}\right)^{1\over 2}   \geq\frac{|\det K|}{2^n}
	\label{URcovariance}
	\end{eqnarray}
	where  $K_{ij}=[\hat y_i,\hat z_j]$ denotes the commutator matrix.
\end{theo}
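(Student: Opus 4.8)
The plan is to combine Theorem~\ref{theoFRFTcom} with the maximum-entropy property of the Gaussian distribution. Recall that among all probability distributions on $\mathbb{R}^n$ with a fixed covariance matrix, the Gaussian maximizes the Shannon differential entropy. Applied to the joint distributions of the commuting tuples $\vec y$ and $\vec z$ (whose covariance matrices are precisely $\gamma^\mathcal{A}$ and $\gamma^\mathcal{B}$, since the symmetrized products reduce to ordinary products for commuting quadratures), this yields the upper bounds $h(\vec y)\le \tfrac{1}{2}\ln\left((2\pi e)^n \det\gamma^\mathcal{A}\right)$ and $h(\vec z)\le \tfrac{1}{2}\ln\left((2\pi e)^n \det\gamma^\mathcal{B}\right)$. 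These are well defined because the reduced covariance matrices are symmetric positive-semidefinite (and positive-definite whenever $\det K\ne 0$, which we assume, so the logarithms make sense).

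First I would add the two maximum-entropy inequalities to obtain an upper bound on the sum $h(\vec y)+h(\vec z)$ in terms of $\det\gamma^\mathcal{A}$ and $\det\gamma^\mathcal{B}$. Then I would invoke the entropic uncertainty relation of Theorem~\ref{theoFRFTcom}, namely $h(\vec y)+h(\vec z)\ge \ln\left((\pi e)^n|\det K|\right)$, which provides a lower bound on the very same sum. Chaining the lower and upper bounds gives
\begin{equation}
\ln\left((\pi e)^n|\det K|\right)\le n\ln(2\pi e)+\tfrac{1}{2}\ln\det\gamma^\mathcal{A}+\tfrac{1}{2}\ln\det\gamma^\mathcal{B}.
\end{equation}

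The last step is purely algebraic: collecting the constant terms yields $n\ln(\pi e)-n\ln(2\pi e)=-n\ln 2$, so the inequality rearranges into $\ln\bigl(|\det K|/2^n\bigr)\le \ln\bigl[(\det\gamma^\mathcal{A})^{1/2}(\det\gamma^\mathcal{B})^{1/2}\bigr]$, and exponentiating both sides reproduces Eq.~(\ref{URcovariance}). I do not expect a genuine obstacle here, since the argument is a clean concatenation of two known facts; the one point requiring care is the maximum-entropy step, which must be justified as applicable to arbitrary (not necessarily Gaussian) distributions and correctly carries the $2\pi e$ factor, whose mismatch with the $\pi e$ of the entropic bound is exactly what produces the $2^n$ in the denominator. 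A further remark worth noting is that this maximum-entropy estimate is a one-sided inequality, so the covariance relation~(\ref{URcovariance}) is generally weaker than the entropic one and is saturated only when the measured distributions are Gaussian.
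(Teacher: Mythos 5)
Your proof is correct and is essentially the paper's own argument: the paper also combines Theorem~\ref{theoFRFTcom} with the Gaussian maximum-entropy property, merely packaging the same inequality chain in terms of entropy powers $N_\mathcal{A}=\frac{1}{2\pi e}e^{\frac{2}{n}h(\vec y)}\le(\det\gamma^\mathcal{A})^{1/n}$ rather than working directly with logarithms. Your bookkeeping of the $\pi e$ versus $2\pi e$ constants (yielding the $2^n$) matches the paper's result exactly.
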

\begin{proof}
Let us define the entropy powers of $\vec y$ and $\vec z$ as
\begin{equation}
N_\mathcal{A}=\frac{1}{2\pi e}e^{\frac{2}{n}h(\vec y)},\quad
N_\mathcal{B}=\frac{1}{2\pi e}e^{\frac{2}{n}h(\vec z)},
\end{equation}
which allows us to rewrite Eq.~(\ref{EURFRFTcom}) as an entropy-power uncertainty relation (see \cite{hertz})
\begin{equation}
N_\mathcal{A}N_\mathcal{B}\geq \frac{ |\det K|^{2/n} }{4} .
\label{entropy-power-ur}
\end{equation} 
Since the maximum entropy for a fixed covariance matrix is reached by the Gaussian distribution, we have that $N_\mathcal{A}\leq~(\det \gamma^\mathcal{A})^{1/n}$ and $N_\mathcal{B}\leq~(\det \gamma^\mathcal{B})^{1/n}$. Combining these inequalities with Eq.~(\ref{entropy-power-ur}), we prove our theorem.
\end{proof}

In the one-mode case, we obtain $\Delta \hat y_1 \Delta \hat z_1 \geq |[\hat y_1,\hat z_1]| /2$ which is Robertson uncertainty relation applied to the two quadratures $\hat y_1$ and $\hat z_1$, as already mentioned. Thus, Theorem \ref{theo-covariance} extends this relation to two joint measurements of $n$ modes and accounts for the correlations between the $y_i$'s  via the term $\det \gamma^\mathcal{A}$  (as well as between the $z_j$'s via the term $\det \gamma^\mathcal{B}$). Note, however, that this covariance-based uncertainty relation is less strong than the entropic uncertainty relation since Theorem~\ref{theo-covariance}
 follows from Theorem~\ref{theoFRFTcom}. 

\section{Conclusions}

We have derived an entropic uncertainty relation which applies to any two $n$-dimensional LCTs $\mathcal{F}_\mathcal{A}(\vec y)$ and $\mathcal{F}_\mathcal{B}(\vec z)$ or any two $n$-modal Gaussian projective measurements resulting in outcomes $\vec y$ and $\vec z$. As implied by our Theorem \ref{theoFRFTcom}, the sum of the entropy of the probability distributions for $\vec y$ and $\vec z$ is lower bounded by a quantity that depends on the determinant of the matrix of commutators $[\hat y_i,\hat z_j]$, a quantity that is invariant under symplectic transformations. This is a generalization of the usual entropic uncertainty relation (\ref{birula}) due to Bialynicki-Birula and Mycielski in the case of any two $n$-dimensional observables that are not canonically conjugate but are connected by an arbitrary LCT.

Theorem \ref{theoFRFTcom} can also be viewed as a natural extension of the uncertainty relation~(\ref{huang}) due to Huang \cite{huang}.
As shown in Figure \ref{transfoAB}, the two considered measurements can be realized by applying a Gaussian unitary $U_A$ or $U_B$ before measuring the $\hat x$ quadratures.  If we restrict ourselves to measuring the $\hat x$ quadrature of the first mode only, then the resulting quadrature is $\hat{A}$ or $\hat{B}$ as defined in Eq.~(\ref{ABdef}). Thus, our entropic uncertainty relation generalizes Huang's setup by including the measurement of any number of modes  instead of the first one only. It naturally accounts for the correlations between the measured $y_i$'s (as well as $z_j$'s) via the use of joint entropies.
Following the same scheme, we also recover the usual entropic uncertainty relation~(\ref{birula}) by applying either the identity ($U_\mathcal{A}=\mathds{1}$) or a tensor product of $\pi/2$ rotations on all mode ($U_\mathcal{B}=R_{\pi/2}^{\otimes n}$) before measuring all $x$ quadratures.


Our results still hold true (with some adaptations) when Shannon entropies are replaced by Rényi entropies, as proven in Theorem~\ref{theo-Renyi}. They also  imply a generalized version of Robertson uncertainty relation expressing the complementarity between two $n$-tuples of quadrature observables in terms of the determinant of a commutator matrix, see Theorem~\ref{theo-covariance}.


As a final note, it must be stressed that we have restricted ourselves to observables that are linear combinations of the $\hat x$ and $\hat p$ quadratures throughout this work, which implies that all commutators $[\hat y_i,\hat z_j]$ are scalars, as well as $\det K$.  However, we believe that it should be possible to extend Theorem \ref{theoFRFTcom} to general vectors of commuting Hermitian operators $\vec A$ and $\vec B$. Then, all commutators would be replaced by their mean values, in analogy with the usual Robertson relation. We therefore suggest the following conjecture:

\begin{conj}
Let $\vec A=(A_1,\cdots A_n)$ be a vector of commuting observables,  $\vec B=(B_1,\cdots B_n)$ be  another vector of commuting observables and $\ket{\psi}$ be the state of the system. The probability distributions of the jointly measured observables $A_i$'s or $B_j$'s in state $\ket{\psi}$ satisfy the entropic uncertainty relation
\begin{equation}
h(\vec A)+h(\vec B)\geq\ln\left((\pi e)^n |\det \bra{\psi}K\ket{\psi} |\right)
\end{equation}
where $K_{ij}=[A_i,B_j]$
\end{conj}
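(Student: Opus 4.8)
The plan is to follow the architecture of the proof of Theorem~\ref{theoFRFTcom}, replacing its \emph{global} symplectic reduction by a construction tied to the state $\ket{\psi}$ itself. Since the $A_i$ commute they admit a joint spectral decomposition with eigenvalues $\vec a$ and generalized eigenstates $\ket{\vec a}$, and likewise the $B_j$ share a common eigenbasis $\ket{\vec b}$; the entropies $h(\vec A)$ and $h(\vec B)$ are then the differential entropies of the distributions $|\braket{\vec a}{\psi}|^2$ and $|\braket{\vec b}{\psi}|^2$. The content of the conjecture is thus an entropic bound on the joint spread of $\ket{\psi}$ in these two eigenbases, now controlled by $\det\bra{\psi}K\ket{\psi}$ rather than by a scalar determinant.

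First I would introduce, for each commuting family, a conjugate set of momenta: operators $\hat\Pi^A_i$ with $[A_i,\hat\Pi^A_j]=i\delta_{ij}$ generating translations along the joint spectrum of the $A_i$, and analogously $\hat\Pi^B_j$ for the $B_j$. Applying Eq.~(\ref{birula}) to the canonically conjugate pair $(\vec A,\vec\Pi^A)$ is immediate; the real work is to relate $\vec\Pi^A$ to $\vec B$ through the commutator matrix. Expanding each $B_j$ to first order about the centroid of $\ket{\psi}$ in the $\vec a$-coordinates yields $[A_i,B_j]\approx i\,\partial B_j/\partial\Pi^A_i$, whose expectation is $\bra{\psi}K_{ij}\ket{\psi}$, so that the Jacobian of the linear map $(\vec A,\vec\Pi^A)\mapsto(\vec A,\vec B)$ has determinant $\det\bra{\psi}K\ket{\psi}/i^n$ up to a phase. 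In this linearized picture the $B_j$ play the role of the output quadratures of Lemma~\ref{theoFRFT}, and one would invoke Theorem~\ref{theoFRFTcom} directly, the symplectic determinant being replaced by $|\det\bra{\psi}K\ket{\psi}|$.

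The hard part will be precisely this linearization. When the $[A_i,B_j]$ are genuine operators rather than scalars, there is no single canonical transformation---linear or not---that simultaneously maps $\vec A$ and $\vec B$ onto standard quadratures, so the state-independent symplectic argument underlying Lemma~\ref{theoFRFT} breaks down. The bound must instead emerge from the state-averaged commutator $\bra{\psi}K\ket{\psi}$, which introduces a circularity: the effective symplectic structure one would linearize against depends on $\ket{\psi}$, the very state whose entropies are being estimated. Showing that the neglected higher-order terms do not violate the inequality---e.g.\ that they only increase $h(\vec A)+h(\vec B)$---is the crux, and is exactly the step separating the entropic statement from the variance-based relation of Theorem~\ref{theo-covariance}, which does follow from $\langle K\rangle$ by a single arithmetic--geometric mean inequality applied to a Robertson-type matrix bound.

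A complementary route would bypass linearization and attack the overlap kernel $\braket{\vec a}{\vec b}$ directly, seeking a Hausdorff--Young/Babenko--Beckner type estimate whose sharp constant reproduces $(\pi e)^n|\det\bra{\psi}K\ket{\psi}|$. This is natural because the canonical case is itself proved this way through the Fourier kernel, but it requires a bound on $\|\braket{\vec a}{\vec b}\|$ for an arbitrary pair of commuting families, for which no analogue of the clean Fourier estimate is presently known. I would regard establishing such a kernel bound, or equivalently a rigorous justification of the local linearization above, as the principal obstacle to promoting the conjecture to a theorem.
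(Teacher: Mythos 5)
The statement you are addressing is, in the paper itself, explicitly a \emph{conjecture}: the authors give no proof and defer it to future work, so the only question is whether your argument settles it independently. It does not, and to your credit you say so yourself. The decisive gap is the linearization step: you expand each $B_j$ to first order around the centroid of $\ket{\psi}$, replace the operator-valued commutators $[A_i,B_j]$ by their expectations $\bra{\psi}K_{ij}\ket{\psi}$, and then invoke Theorem~\ref{theoFRFTcom} as if the $B_j$ were exact quadratures. But differential entropy is a \emph{global} functional of the probability distribution, while your expansion is only a local statement near the centroid; nothing controls the discarded higher-order terms, and no argument is given that they ``only increase $h(\vec A)+h(\vec B)$.'' The machinery of Lemma~\ref{theoFRFT} and Theorem~\ref{theoFRFTcom} works because there is an exact, state-independent symplectic change of variables relating the two measured $n$-tuples; when $K$ is operator-valued no such transformation exists (linear or not), and the effective structure you would linearize against depends on $\ket{\psi}$ itself --- the circularity you flag. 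That circularity is not a technical nuisance to be cleaned up later; it is the entire content of the conjecture, so what you have is a program that correctly identifies the obstacles, not a proof.

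Two further points about the program itself. First, your opening move --- introducing self-adjoint conjugate momenta $\hat\Pi^A_i$ with $[A_i,\hat\Pi^A_j]=i\delta_{ij}$ and applying Eq.~(\ref{birula}) to $(\vec A,\vec\Pi^A)$ --- is not available for general commuting observables: such conjugates exist only when the joint spectrum is all of $\mathbb{R}^n$, and fail for observables with semibounded or discrete spectrum (photon number being the standard example), so even the first line needs restrictive hypotheses absent from the conjecture. Second, your side remark that the covariance analogue ``does follow'' from a Robertson-type matrix bound via an arithmetic--geometric mean step should be treated with caution: the positivity condition on the full $2n\times 2n$ covariance matrix involves the cross-covariances between the $A_i$'s and $B_j$'s, which do not simply drop out when one tries to isolate $\det\gamma^\mathcal{A}\det\gamma^\mathcal{B}$; indeed the paper presents even the extended Robertson relation with $\bra{\psi}K\ket{\psi}$ as a \emph{consequence} of the conjecture, not as an established fact. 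Either a rigorous control of your linearization error or a Babenko--Beckner-type bound on the overlap kernel $\braket{\vec a}{\vec b}$ would be a genuine contribution; as it stands, neither is supplied.
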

This would be a further generalization of the entropic uncertainty relation, also implying an extended Robertson relation involving a matrix of mean values of commutators instead of Eq.~(\ref{URcovariance}). Investigating this conjecture is an interesting topic of future work.

\medskip
\noindent {\it Acknowlegments}: 
We thank Michael G. Jabbour for useful discussions. This work was supported by the F.R.S.-FNRS Foundation under Project No. T.0199.13.
A.H. acknowledges financial support from the F.R.S.-FNRS Foundation.


\end{document}